\renewcommand{\mypara}[1]{\medskip \noindent {\bf #1}}
\newcommand{\IS}{\textnormal{IS}}
\newcommand{\polyloglog}{\textnormal{polyloglog}}
\title{Exponential Time Approximations for \\ Coloring 3-Colorable Graphs}
\author{
    Venkatesan Guruswami\thanks{Simons Institute for the Theory of Computing and the Dept.\ of EECS, Univ.\ of California, Berkeley, Berkeley, CA 94720. \texttt{venkatg@berkeley.edu}. Research supported in part by a Simons Investigator award and NSF grant CCF-2211972.}
    \and
    Rhea Jain\thanks{Dept.\ of Computer Science, Univ.\ of Illinois,
    Urbana-Champaign, Urbana, IL 61801. \texttt{rheaj3@illinois.edu}.}}
\date{}
\begin{document}

\maketitle

\begin{abstract}
  The problem of efficiently coloring 
  $3$-colorable graphs with few colors has received much 
  attention on both the algorithmic and inapproximability fronts. 
  We consider exponential time approximations, in which given a parameter 
  $r$, we aim to develop an $r$-approximation algorithm with the best 
  possible runtime, providing a tradeoff between runtime and approximation
  ratio. In this vein, an algorithm to $O(n^\eps)$-color a 3-colorable
  graphs in time $2^{\Theta(n^{1-2\eps}\log(n))}$ is given in (Atserias and Dalmau, SODA 2022.)

  \smallskip
  We build on tools developed in (Bansal et al., Algorithmic, 2019) 
  to obtain an algorithm to color $3$-colorable graphs with $O(r)$ colors in 
  $\exp\left(\tilde{O}\left(\frac {n\log^{11/2}r} {r^3}\right)\right)$ time, 
  asymptotically improving upon the bound given by Atserias and Dalmau.
\end{abstract}

\section{Introduction}
\label{sec:intro}

Graph Coloring is a classic and fundamental problem in theoretical computer 
science with applications in scheduling, register allocation, and more. 
A graph is said to be $r$-colorable if there is an assignment of $r$ colors to 
vertices such that no two adjacent vertices share the same color. 
The question of whether a given graph can be colored using $r$ colors for 
$r \geq 3$ was one of Karp's NP-complete problems \cite{Karp1972}. 
For any graph $G$, the minimum $r$ such that $G$ is $r$-colorable is called 
the \emph{chromatic number} of $G$; it is known that the chromatic number is 
hard to approximate to a factor $n^{1-o(1)}$ \cite{FeigeK98,Khot01}.

It is natural to ask if one can obtain a valid coloring of a graph with few colors 
if its chromatic number is known. 
The first polynomial time algorithm for this problem was given by 
Wigderson \cite{wigderson1983}, who obtained an $O(n^{1 - \frac 1 {r-1}})$-coloring 
for $r$-colorable graphs. In this paper we focus on the special case where $G$
is $3$-colorable. There are several known polynomial time algorithms for this setting.
Blum obtained a polynomial improvement over \cite{wigderson1983} to 
$\tilde O(n^{3/8})$ colors \cite{blum1994} when the input graph is $3$-colorable.
This was improved to $O(n^{1/4})$ using semi-definite programming methods 
by Karger, Motwani, and Sudan \cite{karger1994approximate}.
A long line of subsequent work culminated in the the development of 
an algorithm using
$O(n^{0.19996})$ colors \cite{KawarabayashiT17}, very recently improved 
to $O(n^{0.19747})$ \cite{KTY24}. 
On the hardness of approximation side, it is known to be NP-hard to color 
a 3-colorable graph using 4 colors 
\cite{KhannaLS00,venkat00hardness,brakensiek_venkat16}, 
and until recently, no better lower bound was known. 
Barto et al. recently showed that it is NP-hard to color an 
$r$-colorable graph using $2r-1$ colors; in particular, it is NP-hard to 
5-color a 3-colorable graph \cite{blko21}. 
Furthermore, there is some evidence based on 
variants of the Unique Games Conjecture that coloring an $r$-colorable graph 
with $O(1) \cdot r$ colors is hard for any $r \geq 3$ \cite{DinurMR09,GSandeep00}.

The hardness of approximating graph coloring in polynomial time has motivated the 
design of algorithms that provide a more fine-grained trade-off between running 
time and approximation factors; see 
\cite{bonnet2018sparsification,bourgeois2011approximation,cygan2008exponentialtime} 
for some such work in graph coloring, as well as the closely related Independent 
Set problem.  
As a simple example, one can obtain an $r$-coloring of a 3-colorable graph $G$ 
in time $O^*(\exp(O(n/r)))$ as follows: partition $G$ into $r/3$ blocks of size 
$3n/r$ each and use an exact 3-coloring exponential-time algorithm on each block.
In general, this idea is surprisingly close to the best we can do: 
any $r$-approximation to Graph Coloring requires at least 
$\exp(n^{1-o(1)}/r^{1+o(1)})$ time, assuming ETH \cite{bundit_thesis}. 

Recently, Bansal et. al. \cite{bansal2019new} obtained an $r$-approximation for 
the Graph Coloring problem that runs in 
$O^*\left(\exp\left(\tilde{O}\left(n/(r\log r) + r \log^2 r\right)\right)\right)$ 
time. As a subroutine in this algorithm, they use and develop an 
$r$-approximation for the Independent Set problem that runs in \\
$O^*\left(\exp\left(\tilde{O}\left(n/(r\log^2 r) + r \log^2 r\right)\right)\right)$ time. 
In this work we improve upon the approximation algorithms given by Bansal et. al.
\cite{bansal2019new} for the setting in which the input graph $G$ is known to be 
3-colorable.

\subsection{Promise Constraint Satisfaction Problems and Width-based Algorithms}

Graph colorability is an example of a larger class of problems known as 
\emph{constraint satisfaction problems} (CSPs). The input to a CSP is a set of 
variables $X$ over a domain $D$, along with a set of constraints. 
Each constraint specifies, for some subset of variables, a relation that it must 
takes values in. The goal is to find a valid assignment of values from the 
domain to each variable that satisfies all constraints. 
Equivalently, one can model CSPs as a problem of finding homomorphism 
between relational structures \cite{feder_vardi98}. That is, we can think of the 
input as two structures: $\mathbb{X} = (X, S_1, \dots, S_n)$, 
where $S_i \subseteq X^{r_i}$, $\mathbb{D} = (D, R_1, \dots, R_n)$, 
where $R_i \subseteq D^{r_i}$. The goal is to find a function $f: X \to D$ 
such that $f((x_1), \dots, f(x_{r_i})) \in R_i$ for all 
$(x_1, \dots, x_{r_i}) \in S_i$. In the case of $r$-graph colorability, 
the variables are the vertices of the graph, the sets $S_i$ are the edges, 
the domain is the set of colors $\{1, \dots, r\}$, and the constraints $R_i$ are 
all tuples of two different colors. 

Brakensiek and Guruswami \cite{brakensiek_guruswami21}, building on work in \cite{AGH17},
recently proposed a 
generalization of this class of problems, known as 
\emph{promise constraint satisfaction problems} (PCSPs); these are CSPs in which 
we are guaranteed that the input satisfies an even stronger set of constraints. 
Formally, we are given three relational structures 
$\mathbb X, \mathbb A, \mathbb B$ where there exists a homomorphism from 
$\mathbb B$ to $\mathbb A$. Given the promise that there exists some homomorphism 
from $\mathbb X$ to $\mathbb B$, the goal is to find one from $\mathbb X$ to 
$\mathbb A$. It is easy to see that the problem of $r$-coloring a 3-colorable 
graph is an example of a promise CSP, with $\mathbb A$ 
corresponding to $r$-colorability and $\mathbb B$ to 3-colorability.

Fixed-template CSPs are CSPs in which the structure $\mathbb A$ is fixed and the 
input specifies $\mathbb X$. One of the few known techniques for solving fixed-template CSPs in 
polynomial time is \emph{local consistency checking}. This technique considers 
\emph{$k$-strategies}, which are sets of partial solutions on subsets of variables 
of size at most $k$. These sets must be closed under restrictions 
and all partial solutions of subsets of size less than $k$ must be extendable to 
all supersets up to size $k$. 
Clearly, if there exists a homomorphism from $\mathbb X$ to $\mathbb A$, then 
there exists a $k$-strategy for all $k \leq |X|$. A CSP is said to have \emph{width} $k$ if the 
converse holds: a non-empty $k$-strategy implies a homomorphism from 
$\mathbb X$ to $\mathbb A$. For any fixed $k$, this gives a simple polynomial 
time algorithm to solve the CSP: find all partial solutions on subsets of 
size at most $k$ and remove the ones that don't satisfy the closure and 
extension properties until we either have a non-empty $k$-strategy or an 
empty set. 

One can define a similar notion in the context of PCSPs \cite{blko21}. 
A PCSP has width $k$ if a non-empty $k$-strategy on $(\mathbb X, \mathbb B)$ 
implies a homomorphism from $\mathbb X$ to $\mathbb A$. Recently, 
Atserias and Dalmau \cite{atserias2022promise} showed that given an input graph 
$G$ and parameter $\eps \in (0, 1/2)$, the problem of distinguishing whether 
$G$ is 3-colorable from the case where $G$ is not $O(n^{\eps})$-colorable 
has width between $n^{1-2\eps}$ and $n^{1-3\eps}$. This gives an algorithm to 
color 3-colorable graphs with $O(n^\eps)$ colors in time 
$2^{\Theta(n^{1-2\eps}\log n)}$. Improving this runtime is posed as an open 
question in \cite{atserias2022promise}.

\subsection{Definitions and Notation}

For a given graph $G$, we let $V(G)$ denote its set of vertices, 
$E(G)$ denote the set of edges, and $n := |V(G)|$. 
For any $v \in V(G)$, we call the \emph{neighborhood} of $v$ 
(denoted $N_G(v)$) the set of all $u \in V(G)$ such that $\{u, v\} \in E(G)$. 
We let $d_G(v) = |N_G(v)|$ denote the \emph{degree} of $v$. 
We write $O^*$ to suppress polynomial factors in $n$, and $\tilde{O}$ to 
suppress $\polyloglog(r)$ factors (where $r$ is the approximation ratio). 
The analogous definitions hold for $\tilde{\Omega}$. 

\mypara{Maximum Independent Set}: We say a set $S \subseteq V(G)$ is 
\emph{independent} if for all $u, v \in S$, $\{u, v\} \notin E$. 
We let $\alpha(G)$ denote the size of the maximum cardinality independent set 
of $G$. 

\mypara{Graph Coloring}: We say that a function $c: V(G) \to [r]$ is a 
\emph{$r$-coloring} if $c(u) \neq c(v)~\forall \{u, v\} \in E(G)$. 
We say $G$ is \emph{$r$-colorable} if a valid $r$-coloring of $G$ exists. 
We let $\chi(G)$ denote the chromatic number of $G$, i.e. the minimum 
integer $r$ such that $G$ is $r$-colorable.

\subsection{Our Result and Techniques}

Our main contribution is an improved appoximation algorithm for coloring 
3-colorable graphs.

\begin{theorem}
  \label{thm:main_result}
  Given a 3-colorable graph $G = (V, E)$ and a parameter $r > 0$, 
  there exists an algorithm that, with high probability, outputs a 
  valid coloring of $G$ using at most $O(r)$ colors in time 
  $O^*\left(\exp\left(\tilde{O}\left(\frac {n\log^{11/2}r} {r^3}\right)\right)\right)$. 
\end{theorem}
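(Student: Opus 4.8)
The plan is to follow the Bansal et al.\ paradigm of reducing coloring to iterated independent‑set extraction: repeatedly find a large independent set, give it a fresh color, delete it, and recurse, stopping once the surviving graph has only $\tilde{O}(n/r^3)$ vertices, at which point one finishes with a standard exact coloring algorithm running in time $O^*(c^m)$ for a constant $c$. Two features of $3$‑colorable graphs make this paradigm far more powerful than in the general case. First, every induced subgraph $H$ satisfies $\alpha(H)\ge |V(H)|/3$, so the extraction never loses a chromatic‑number factor, and plugging the Bansal et al.\ independent‑set routine in verbatim already yields an $O(r)$‑coloring in time $O^*\!\left(\exp\!\left(\tilde{O}(n/(r\log r))\right)\right)$. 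Second — and this is the new lever — for every vertex $v$ the neighborhood $N_G(v)$ is bipartite, hence properly $2$‑colorable in polynomial time; in particular a vertex of degree $\Delta$ hands us, for free, a polynomially‑findable independent set of size at least $\Delta/2$.

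Concretely, I would proceed in two stages. Stage one is a Wigderson‑style sparsification: while the current graph has a vertex $v$ of degree at least $n/r$, $2$‑color the larger side of the bipartition of $N_G(v)$ and delete it; since each such deletion removes at least $n/(2r)$ vertices, this halts after $O(r)$ steps, costs $O(r)$ colors and polynomial time, and leaves a $3$‑colorable graph on at most $n$ vertices with maximum degree below $n/r$. Stage two colors this sparse $3$‑colorable graph with $O(r)$ further colors by a recursive routine that interleaves two moves: (a) further bipartite‑neighborhood extractions whenever a moderately high‑degree vertex is present, and (b) calls to a version of the Bansal et al.\ independent‑set primitive adapted so that its running time is governed by the current maximum degree rather than by the number of vertices, used to keep driving the degree down. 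The moves are scheduled so that the color budget — a convergent geometric series summing to $O(r)$ — suffices, and the running time is dominated by its last term, which we arrange to be $O^*\!\left(\exp\!\left(\tilde{O}(n\log^{11/2}r/r^3)\right)\right)$; this same bound then also dominates the exact coloring of the final $\tilde{O}(n/r^3)$‑vertex residual. The ``with high probability'' guarantee enters only through the random partitioning/sampling internal to the independent‑set primitive, and an outer substitution $r\mapsto r/(\log r)^{O(1)}$ converts the $O(r\log r)$ colors of the naive count into the claimed $O(r)$.

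The heart of the argument — and the step I expect to be hardest — is obtaining a \emph{cubic} rather than merely quadratic saving in $r$. A naive combination (sparsify to maximum degree $n/r$, then run a degree‑aware version of the Bansal et al.\ routine whose exponent scales with the maximum degree) only improves $1/r$ to $1/r^2$; extracting the extra factor of $r$ requires a genuinely tighter analysis of the independent‑set machinery on sparse $3$‑colorable graphs that simultaneously exploits the always‑large color classes and the bipartite neighborhoods — for instance by applying the neighborhood trick recursively inside the brute‑force blocks of the Bansal et al.\ construction — together with an amortization showing that the interleaved degree‑reduction and extraction steps spend only $O(r)$ colors in total, rather than the $O(r^2)$ that a phase‑by‑phase Wigderson reduction down to degree $n/r^2$ would cost. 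Finally, one must push the polylogarithmic‑in‑$r$ losses through the $O(\log r)$ levels of the recursion and the nested brute‑force blocks and verify that they accumulate to exactly $\log^{11/2}r$ in the exponent; pinning down this power of $\log r$ precisely, rather than some larger constant power, is the most delicate part of the bookkeeping.
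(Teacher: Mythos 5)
Your outer framework --- iterated independent-set extraction, a residual of $\tilde{O}(n/r^3)$ vertices finished by exact coloring, and an outer substitution $r \mapsto r/\mathrm{polylog}(r)$ to recover $O(r)$ colors --- matches the paper. But the step you yourself flag as the heart of the argument, obtaining a cubic rather than quadratic saving in $r$, is exactly where your proposal has a genuine gap, and the mechanism you gesture at is not the one that works. The paper gets the cubic saving not from any combinatorial use of bipartite neighborhoods, but from the Karger--Motwani--Sudan semidefinite-programming rounding: on a $3$-colorable graph of maximum degree $d$, their SDP-based algorithm finds an independent set of size $\Omega\bigl(n/(d^{1/3}\sqrt{\log d})\bigr)$ in \emph{polynomial} time. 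Since $\alpha \ge n/3$ in a $3$-colorable graph, this is already an $r$-approximation once $d = \tilde{O}(r^3/\log^{3/2} r)$. Plugging this polynomial-time base case directly into the Bansal et al.\ degree-reduction/branching lemma (which runs in time $\exp\bigl(\frac{n}{d(r)}\log(4d(r)/r)\bigr)\cdot T_{\mathcal{A}}$) immediately yields the $\exp\bigl(\tilde{O}(n\log^{5/2}r/r^3)\bigr)$ independent-set routine, with no interleaving, no amortized color accounting, and no recursion inside brute-force blocks.

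Your proposed route --- Wigderson-style sparsification via $2$-coloring bipartite neighborhoods, followed by a ``genuinely tighter analysis'' exploiting the neighborhood trick recursively --- stays within the purely combinatorial Wigderson/Blum lineage, and that lineage does not reach a $d^{1/3}$-type guarantee for bounded-degree $3$-colorable graphs (getting even $n^{3/8}$ combinatorially was the substance of Blum's work). As you note, the naive degree-$n/r$ sparsification plus a degree-aware branching routine tops out at an exponent of $n/r^2$; the extra factor of $r$ is supplied by the SDP, not by amortization or nesting. Without invoking the KMS independent-set lemma (or an equivalent $d^{1/3}$-approximation for the bounded-degree base case), the claimed $n\log^{11/2}r/r^3$ exponent is not reachable by the steps you describe, and the $\log^{11/2}r$ bookkeeping you worry about is actually straightforward once the right base case is in place: $\log^{3/2}r$ from the degree threshold, $\log r$ from the branching lemma's $\log(d/r)$ factor, and $\log^{3}r$ from replacing $r$ by $r' = r/\log(r^3)$ in the coloring loop.
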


For any fixed real $\eps \in (0, 1/3)$, setting $r$ to be $n^{\eps}$, 
we obtain a coloring of $G$ using $O(n^{\eps})$ colors in time 
$O^*\left(\exp\left(O(n^{1-3\eps})\right)\right)$. 
This asymptotically improves 
upon the width-based algorithm given by Atserias and Dalmau \cite{atserias2022promise}. 
Furthermore, up to polynomial factors in $n$, this matches the width-based 
lower bound given in~\cite{atserias2022promise}. Whether our result can be 
captured by a width-based argument remains a direction for future research.

\mypara{Techniques:} We closely follow the algorithm given by Bansal et al. 
\cite{bansal2019new}. We first consider the independent set problem.
The approach in \cite{bansal2019new} is 
to branch on high degree vertices. For each such vertex, the algorithm chooses 
to include it with some small probability, in which case all of its neighbors are 
removed, else it removes the high degree vertex and recurses on the remaining 
graph. The probability of choosing each vertex is chosen in a way to ensure the 
number of leaves of the branching process remains bounded. Once the maximum 
degree of the graph is small, the algorithm terminates 
by applying a known approximation for for graphs with bounded degree \cite{bgg15}.
We follow from a similar algorithm, replacing the base case 
with an $\tilde \Omega(d^{1/3})$-approximation for Independent Set on 
3-colorable graphs. 
To then color the graph $G$, the algorithm in~\cite{bansal2019new} repeatedly 
finds large independent sets and colors each independent set with its own color. 
Once the number of remaining vertices is small enough, they use a brute force 
inclusion-exclusion based algorithm for coloring the remaining vertices in time 
$O^*(2^{|V|})$~\cite{BHK09}. We follow the same approach, using our improved 
approximation for Independent Set.

\section{Improved Approximation for Coloring 3-Colorable Graphs}

\subsection{Independent Set}
\label{sec:max_is}

In this section, we prove the following lemma providing an approximation algorithm 
for finding an independent set in a 3-colorable graph:

\begin{lemma}
  \label{lem:is_result}
  Given a 3-colorable graph $G = (V, E)$ and a parameter $r \in \Z_{>0}$, 
  there exists an algorithm that, with probability at least $1/2$, 
  outputs an independent set of size at least $\alpha(G)/r$ in time 
  $O^*\left(\exp\left(\tilde O\left(\frac{n \log^{5/2}r}{r^3}\right)\right)\right)$. 
\end{lemma}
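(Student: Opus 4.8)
The plan is to follow the branching strategy of Bansal et al.\ \cite{bansal2019new} verbatim, swapping in a better base case that exploits 3-colorability. Concretely, I would maintain a current graph $H$ (initially $G$) and a degree threshold $d$ to be fixed later. While $H$ contains a vertex $v$ of degree exceeding $d$, the algorithm branches: with probability $p_v$ (a carefully chosen function of $d_H(v)$, roughly $\Theta(\log r / d_H(v))$ as in \cite{bansal2019new}) it commits $v$ to the independent set, deletes $N_H[v]$, and recurses; with probability $1-p_v$ it deletes $v$ alone and recurses. Each root-to-leaf path thus produces a partial independent set together with a residual graph of maximum degree at most $d$, on which we invoke the base-case routine. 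Returning the best independent set found over all leaves (the tree is explored in full), a standard calculation bounds the expected number of leaves by $\exp\bigl(\tilde O(n \log r / d)\bigr)$: committing a degree-$\Delta$ vertex contributes a $\binom{n}{\le \Theta(n\log r/\Delta)}$-type factor while deleting roughly a $1/\Delta$-fraction of mass, and summing the ``commit'' costs along a path telescopes. The key structural fact, also from \cite{bansal2019new}, is that with the prescribed probabilities some leaf's committed set plus the base-case output recovers an $r$-approximation; 3-colorability is inherited by every $H$ along the way since it is closed under vertex deletion, so the base case is always legitimately applicable.

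The base case is where I depart from \cite{bansal2019new}. On a graph of maximum degree $\le d$ that is 3-colorable, I want an independent set within a factor $\tilde O(d^{1/3})$ of optimum, rather than the $\tilde O(d/\log d)$-type bound used for general graphs in \cite{bgg15}. Since the graph is 3-colorable, $\alpha(H) \ge |V(H)|/3$, so it suffices to find an independent set of size $\tilde\Omega(|V(H)|/d^{1/3})$. This I would obtain by combining the classical Wigderson/KMS-style idea with the degree bound: on a 3-colorable graph, either there is a vertex whose neighborhood is 2-colorable (hence contains an independent set of size $\ge d_H(v)/2$ which we can extract greedily), or one applies the Karger--Motwani--Sudan vector-coloring rounding, which on a 3-colorable graph of maximum degree $d$ yields an independent set of size $\tilde\Omega(|V(H)|/d^{1/3})$. (The $d^{1/3}$ exponent is exactly the bounded-degree form of the $n^{1/4}$ KMS bound: the SDP gives $\tilde\Omega(n/\Delta^{1/3})$ when the max degree is $\Delta$.) This runs in polynomial time, so it contributes only an $O^*(1)$ factor.

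It remains to balance: the running time is $O^*\bigl(\exp(\tilde O(n\log r / d))\bigr)$ from the branching, and the base case gives approximation ratio $\tilde O(d^{1/3})$, which we need to be at most $r$. Setting $d = \tilde\Theta(r^3)$ (absorbing the logarithmic fudge factors from the rounding guarantee, which is where the $\log^{5/2} r$ will come from — $\log r$ from the branching probabilities times the $\polylog r$ slack in the $d^{1/3}$ bound, raised to the appropriate power) makes the approximation ratio $O(r)$, and then $n \log r / d = \tilde O\bigl(n \log^{5/2} r / r^3\bigr)$, giving the claimed time bound. The success probability $1/2$ is handled by the same argument as in \cite{bansal2019new}: the randomized commit choices succeed with probability $\exp(-\tilde O(n\log r/d))$ along the ``good'' path, so we repeat the whole procedure $\exp(\tilde O(n\log r/d))$ times — which is absorbed into the stated running time — and take the best output, driving the failure probability below $1/2$.

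The main obstacle I anticipate is making the base-case guarantee genuinely give exponent $1/3$ with only $\polylog$ losses: one must verify that the KMS SDP rounding, when specialized to maximum degree $d$ rather than to the worst case $d = n$, really yields $\tilde\Omega(n/d^{1/3})$ and not something weaker, and that this composes cleanly with the degree-threshold choice without the log factors blowing past $\log^{5/2} r$. A secondary delicate point is re-deriving the leaf-count bound for the branching process with the modified stopping rule (stop at degree $d = \tilde\Theta(r^3)$ rather than at $d = \tilde\Theta(r)$ as in \cite{bansal2019new}); this should go through essentially unchanged, but the bookkeeping tying the per-path ``commit'' costs to the total deleted mass needs to be redone with the new threshold to confirm the $n\log r/d$ exponent.
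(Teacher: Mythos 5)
Your proposal matches the paper's proof in all essentials: both plug the Karger--Motwani--Sudan guarantee of a polynomial-time-computable independent set of size $\Omega\left(n/(d^{1/3}\sqrt{\log d})\right)$ in 3-colorable graphs of maximum degree $d$ into the Bansal et al.\ branching framework with degree threshold $d = \tilde\Theta(r^3/\log^{3/2} r)$, which is exactly how the paper obtains the $\exp\left(\tilde O(n\log^{5/2} r/r^3)\right)$ bound. The only cosmetic difference is the final amplification step: the paper converts the expected-size guarantee into a probability-$1/2$ guarantee by running the expected-time algorithm $r$ times with parameter $r/2$ and applying Markov's inequality both to the runtime and to the nonnegative gap $\alpha(G)-|S|$, rather than your repeat-until-the-good-path-appears argument, but both are routine and absorbed into the stated running time.
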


We use the following lemma from \cite{bansal2019new}.

\begin{lemma}[\cite{bansal2019new}]
\label{lem:bansal_indset}
  Let $\calA(G')$ be an approximation algorithm that outputs an independent set of 
  $G'$ of size $\alpha(G')/r$. Suppose $\calA(G')$ runs in time 
  $T_\calA(n', r)$ (where $n' = |V(G')|$) 
  when $G$ has maximum degree $d(r)$, where $d(r) \geq 2r$. 
  Then, there is an algorithm that given any graph $G$, outputs an independent 
  set of $G$ of expected size $\alpha(G)/r$ in expected time 
  $O^*\left(\exp\left(\frac n {d(r)} \log(4d(r)/r)\right)T_\calA(n,r)\right)$.
\end{lemma}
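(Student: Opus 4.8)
The plan is to instantiate Lemma~\ref{lem:bansal_indset} with a base-case routine $\calA$ designed for bounded-degree 3-colorable graphs, and to tune the degree threshold $d(r)$ so that $\calA$ is an $r$-approximation while the branching overhead $\exp\bigl(\tfrac{n}{d(r)}\log(4d(r)/r)\bigr)$ stays within the stated bound. For $\calA$ I would use the Karger--Motwani--Sudan semidefinite-programming algorithm~\cite{karger1994approximate}, which colors a 3-colorable graph of maximum degree $d$ with $O\bigl(d^{1/3}\sqrt{\log d}\bigr)$ colors in polynomial time; returning the largest color class yields an independent set of size at least $n/O(d^{1/3}\sqrt{\log d}) \ge \alpha(G')/O(d^{1/3}\sqrt{\log d})$, so $\calA$ is an $O(d^{1/3}\sqrt{\log d})$-approximation with $T_\calA(n',r) = O^*(1)$. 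The key structural point is that the branching process underlying Lemma~\ref{lem:bansal_indset} only recurses on induced subgraphs of $G$, which remain 3-colorable, so $\calA$ is legitimately applicable at every leaf. I would then choose $d(r)$ to be the largest degree for which the KMS bound is at most $r$ (after replacing $r$ by $r/2$ to leave room for the amplification step below); solving $C\,d^{1/3}\sqrt{\log d} = \Theta(r)$ gives $d(r) = \Theta\!\bigl(r^3/\log^{3/2}r\bigr)$, which satisfies the hypothesis $d(r) \ge 2r$ provided $r$ exceeds an absolute constant. (For $r$ below that constant one instead applies Lemma~\ref{lem:bansal_indset} with $d(r)$ a large constant and $\calA$ an exact algorithm on constant-degree graphs; since the target exponent is then $\Theta(n)$ this is comfortably within budget.)

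With this choice, Lemma~\ref{lem:bansal_indset} outputs an independent set of expected size $2\alpha(G)/r$ in expected time $O^*\!\bigl(\exp(\tfrac{n}{d(r)}\log(4d(r)/r))\bigr)\cdot T_\calA(n,r)$. Here $\tfrac{n}{d(r)} = \Theta\!\bigl(n\log^{3/2}r / r^3\bigr)$ while $\log(4d(r)/r) = \log\Theta(r^2/\log^{3/2}r) = \Theta(\log r)$, so the exponent is $\Theta\!\bigl(n\log^{5/2}r/r^3\bigr)$, and since $T_\calA = O^*(1)$ this matches the claimed running time $O^*\!\bigl(\exp(\tilde O(n\log^{5/2}r/r^3))\bigr)$. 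Thus the power $\log^{5/2}r$ is exactly the product of the $\log^{3/2}r$ saved in the degree threshold --- owing to the $\sqrt{\log d}$ factor in the KMS guarantee --- and the extra $\log r$ coming from the logarithmic branching factor $\log(4d(r)/r)$.

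Finally I would upgrade the expected guarantees of Lemma~\ref{lem:bansal_indset} to the claimed ``with probability at least $1/2$'' statement. Since the returned set $S$ satisfies $\mathbb E[|S|] \ge 2\alpha(G)/r$ and $|S| \le \alpha(G)$, a one-line reverse-Markov estimate gives $\Pr[\,|S| \ge \alpha(G)/r\,] \ge 1/r$; running $O(r\log n) = O^*(1)$ independent trials, each truncated at twice its expected running time (valid with probability $\ge 1/2$ by Markov), and returning the largest set found then drives the overall success probability to $1 - o(1)$ at the cost of only a $\mathrm{poly}(n)$ factor in the running time. I expect the only points requiring genuine care to be (i) the bookkeeping of the polylogarithmic factors --- it is precisely the $\sqrt{\log}$ in the base-case coloring bound that makes the final exponent $\log^{5/2}r$ rather than a larger power of $\log r$ --- and (ii) checking that 3-colorability is preserved down the branching tree so that a 3-colorable-specific subroutine may be called at the leaves; with those in hand the degree-threshold choice and the remaining arithmetic are routine.
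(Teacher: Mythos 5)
Your proposal does not prove the assigned statement. The statement is Lemma~\ref{lem:bansal_indset} itself: the reduction that converts an $r$-approximation $\calA$ working only on graphs of maximum degree $d(r)$ into an $r$-approximation on \emph{arbitrary} graphs, at a multiplicative cost of $\exp\bigl(\frac{n}{d(r)}\log(4d(r)/r)\bigr)$ in expected running time. Your very first sentence is ``the plan is to instantiate Lemma~\ref{lem:bansal_indset} with a base-case routine $\calA$,'' i.e., you take the lemma as a black box and derive consequences from it. What you actually establish is the content of Corollary~\ref{cor:is} and Lemma~\ref{lem:is_result} (choosing $d(r)=\Theta(r^3/\log^{3/2}r)$ via the Karger--Motwani--Sudan bound, checking that $3$-colorability survives vertex deletions, and amplifying the expected guarantee to a constant success probability via reverse Markov and repeated trials). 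That part is correct and tracks the paper's own treatment of those results almost exactly, but it is downstream of the lemma, not a proof of it.

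The missing content is the branching argument that constitutes the lemma: repeatedly select a vertex $v$ of degree exceeding $d(r)$ and randomly either commit $v$ to the independent set (deleting $v$ and all of $N(v)$, i.e., at least $d(r)+1$ vertices) or discard $v$ (deleting one vertex), with the branch probability tuned so that (i) an optimal independent set survives to the bounded-degree base case with probability at least $1/r$ times the fraction $\calA$ recovers, giving expected output size $\alpha(G)/r$, and (ii) the expected number of explored branches is controlled by a recurrence of the form $L(n)\le L(n-1)+L(n-d(r)-1)$ weighted by the branch probabilities, whose solution yields the factor $\exp\bigl(\frac{n}{d(r)}\log(4d(r)/r)\bigr)$; the hypothesis $d(r)\ge 2r$ is what makes the probability choice feasible. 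None of this analysis appears in your writeup. (For what it is worth, the paper also does not reprove this lemma --- it imports it from Bansal et al. --- but as a blind proof of the stated claim, your argument is circular: it assumes exactly what was to be shown.)
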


We note that the algorithm of \cite{bansal2019new} only calls 
$\calA(G')$ on graphs $G'$ obtained by deleting vertices and edges from $G$. 
Therefore, if $G$ is 3-colorable, we can assume $G'$ is as well.
The seminal work of Karger, Motwani, and Sudan~\cite{karger1994approximate} 
provides a polynomial time algorithm to color 3-colorable graphs using 
$O(d^{1/3}\log^{1/2}d \log n)$ colors, where $d$ is the maximum degree of the 
input graph. We will use the following lemma from their analysis:
\begin{lemma}[\cite{karger1994approximate}]
\label{lem:karger_sdp}
  Given a $3$-colorable graph $G$, there exists a polynomial time algorithm to 
  find an independent set of size $\Omega(n/(d^{1/3}\sqrt{\log d}))$,
  where $d$ is the maximum degree of the input graph.
\end{lemma}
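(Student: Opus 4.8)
The plan is to follow the semidefinite programming method of Karger, Motwani, and Sudan~\cite{karger1994approximate}. First I would solve the \emph{vector $3$-coloring} relaxation: find unit vectors $\{u_v\}_{v\in V}\subseteq\mathbb R^{|V|}$ minimizing $\lambda$ subject to $\langle u_v,u_w\rangle\le\lambda$ for every $\{v,w\}\in E(G)$. Since $G$ is $3$-colorable, mapping the three color classes to the vertices of an equilateral triangle inscribed in the unit circle of $\mathbb R^2$ (pairwise inner products exactly $-1/2$) witnesses that the optimum is at most $-1/2$, and an SDP solver outputs in polynomial time unit vectors with $\langle u_v,u_w\rangle\le-1/2+1/n$ on every edge; the extra $1/n$ slack contributes only a $(1+o(1))$ factor everywhere below, so I treat the bound as $-1/2$ for readability.

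Next comes the random projection rounding. If $d=O(1)$ then a greedy independent set already has size $\ge n/(d+1)=\Omega(n/(d^{1/3}\sqrt{\log d}))$, so assume $d$ is large (in particular the threshold $t$ below exceeds $\sqrt2$). Sample a standard Gaussian $g\in\mathbb R^{|V|}$, set $t:=\sqrt{\frac{2}{3}\ln d}$, and let $S:=\{v:\langle g,u_v\rangle\ge t\}$. For each $v$ we have $\langle g,u_v\rangle\sim N(0,1)$, so with $Z\sim N(0,1)$, $\mathbb E[|S|]=n\Pr[Z\ge t]\ge n\cdot\frac{1}{2t\sqrt{2\pi}}e^{-t^2/2}=\frac{n\,d^{-1/3}}{2t\sqrt{2\pi}}$ by the Gaussian tail lower bound and $e^{-t^2/2}=d^{-1/3}$. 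For a fixed edge $\{v,w\}$, the pair $(\langle g,u_v\rangle,\langle g,u_w\rangle)$ is centered jointly Gaussian with unit variances and covariance $\langle u_v,u_w\rangle\le-1/2$, so $\langle g,u_v\rangle+\langle g,u_w\rangle$ is centered Gaussian with variance $2+2\langle u_v,u_w\rangle\le1$, whence $\Pr[v,w\in S]\le\Pr[\langle g,u_v\rangle+\langle g,u_w\rangle\ge2t]\le\Pr[Z\ge2t]\le\frac{1}{2t\sqrt{2\pi}}e^{-2t^2}=\frac{d^{-4/3}}{2t\sqrt{2\pi}}$. Since $|E(G)|\le nd/2$, the expected number of edges with both endpoints in $S$ is at most $\frac{n\,d^{-1/3}}{4t\sqrt{2\pi}}$.

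Deleting one endpoint of each edge inside $S$ leaves an independent set of size at least $|S|$ minus the number of such edges, which in expectation is at least $\frac{n\,d^{-1/3}}{2t\sqrt{2\pi}}-\frac{n\,d^{-1/3}}{4t\sqrt{2\pi}}=\frac{n\,d^{-1/3}}{4t\sqrt{2\pi}}=\Omega\!\left(\frac{n}{d^{1/3}\sqrt{\log d}}\right)$ using $t=\Theta(\sqrt{\log d})$. To convert this into a guarantee, note the returned size lies in $[0,n]$ with mean $\mu=\Omega(n/(d^{1/3}\sqrt{\log d}))$, so $\Pr[\text{size}\ge\mu/2]>\mu/(2n)$ is an inverse polynomial; repeating the rounding polynomially many times and keeping the best run then succeeds with high probability (alternatively the rounding can be derandomized by conditional expectations, as in~\cite{karger1994approximate}).

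The delicate point is the calibration of $t$: the constant $\frac23$ in $t^2=\frac23\ln d$ is exactly what makes $e^{-t^2/2}=d^{-1/3}$ (which governs the yield) while forcing $e^{-2t^2}=d^{-4/3}$, so that even after multiplying the per-edge collision probability by the $\le nd/2$ edges the expected loss is $O(n\,d^{-1/3}/\sqrt{\log d})$, strictly below $\mathbb E[|S|]$. Getting that comparison to close with a positive constant requires retaining the $\frac{1}{t\sqrt{2\pi}}$ prefactor in \emph{both} the lower bound for $\Pr[Z\ge t]$ and the upper bound for $\Pr[Z\ge2t]$; the cruder estimate $\Pr[Z\ge x]\le e^{-x^2/2}$ would cost an extra $\sqrt{\log d}$ factor here. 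The remaining ingredients --- SDP feasibility, the bivariate-Gaussian collision bound, the greedy fallback for small $d$, and bookkeeping of the $1/n$ SDP slack --- are routine.
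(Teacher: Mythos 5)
The paper states this lemma as an imported result from Karger--Motwani--Sudan and gives no proof of its own, and your reconstruction is exactly their argument: the vector $3$-coloring SDP followed by Gaussian threshold rounding at $t=\sqrt{\tfrac{2}{3}\ln d}$, with the edge-collision count controlled via the variance bound $2+2\langle u_v,u_w\rangle\le 1$. The calculation is correct, including the careful retention of the $1/(t\sqrt{2\pi})$ prefactors needed to make the yield beat the loss by a constant factor.
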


Combining these two results, we obtain the following corollary:

\begin{corollary}
\label{cor:is} 
  Given a $3$-colorable graph $G$, there exists a polynomial time algorithm 
  that outputs an independent set of expected size $\alpha(G)/r$ in expected 
  time $O^*\left(\exp\left(\tilde{O}\left(\frac {n\log^{5/2}r} {r^3}\right)\right)\right)$. 
\end{corollary}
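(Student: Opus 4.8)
The plan is to instantiate Lemma~\ref{lem:bansal_indset} with $\calA$ taken to be the SDP-based algorithm of Lemma~\ref{lem:karger_sdp}, after picking the degree threshold $d(r)$ small enough that this base-case routine already yields an $r$-approximation. Concretely, I would set $d = d(r)$ to be the largest integer with $d^{1/3}\sqrt{\log d} \le c\,r$, where $c$ is a constant chosen to dominate the implicit constant in Lemma~\ref{lem:karger_sdp}. A quick check shows $d(r) = \Theta\!\left(r^{3}/\log^{3/2} r\right)$: substituting $d = r^{3}/\log^{3/2} r$ gives $d^{1/3} = r/\sqrt{\log r}$ and $\sqrt{\log d} = \Theta(\sqrt{\log r})$, so $d^{1/3}\sqrt{\log d} = \Theta(r)$.

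First I would verify the base case. Any graph $G'$ that the branching of Lemma~\ref{lem:bansal_indset} passes to $\calA$ is obtained from $G$ by deleting vertices and edges, hence is still $3$-colorable (as noted after that lemma) and has maximum degree at most $d(r)$. Since $x \mapsto x^{1/3}\sqrt{\log x}$ is increasing, Lemma~\ref{lem:karger_sdp} applied to $G'$ returns an independent set of size $\Omega\!\left(n'/(d(r)^{1/3}\sqrt{\log d(r)})\right) \ge n'/r \ge \alpha(G')/r$ by the choice of $d(r)$, where $n' = |V(G')|$; and it runs in polynomial time, so $T_{\calA}(n',r) = O^*(1)$. Thus $\calA$ satisfies the hypotheses of Lemma~\ref{lem:bansal_indset}, provided also $d(r) \ge 2r$, which holds for all sufficiently large $r$ because $r^{3}/\log^{3/2} r \gg 2r$. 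The remaining constant-sized values of $r$ can be handled by brute-force computation of a maximum independent set, whose $\exp(O(n))$ running time lies within the claimed bound when $r = O(1)$.

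Next I would plug this into Lemma~\ref{lem:bansal_indset}, which then produces an algorithm on $G$ outputting an independent set of expected size $\alpha(G)/r$ in expected time
\[
  O^*\!\left(\exp\!\left(\frac{n}{d(r)}\,\log\!\big(4d(r)/r\big)\right)\cdot T_{\calA}(n,r)\right).
\]
With $d(r) = \Theta(r^{3}/\log^{3/2} r)$ we get $n/d(r) = \Theta\!\left(n\log^{3/2} r / r^{3}\right)$ and $\log(4d(r)/r) = \log\Theta(r^{2}/\log^{3/2} r) = \Theta(\log r)$, so the exponent is $\Theta\!\left(n\log^{5/2} r / r^{3}\right)$; combined with $T_{\calA}(n,r) = O^*(1)$ this is exactly the stated time $O^*\!\left(\exp\!\left(\tilde O\!\left(n\log^{5/2} r / r^{3}\right)\right)\right)$.

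I do not expect a genuine obstacle: the corollary is a composition of the two quoted lemmas. The one place that needs care is the calibration of $d(r)$ --- it must be large enough that the $n/d(r)$ factor in the branching overhead is small, but small enough that Lemma~\ref{lem:karger_sdp} still guarantees an $r$-approximation on graphs of maximum degree $d(r)$ --- together with routine bookkeeping of the $\polyloglog(r)$ factors absorbed by $\tilde O$ (which is why the $\log^{3/2} r$ correction inside $d(r)$ only matters up to such factors).
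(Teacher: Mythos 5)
Your proposal is correct and follows essentially the same route as the paper: instantiate Lemma~\ref{lem:bansal_indset} with the algorithm of Lemma~\ref{lem:karger_sdp} as the base case, calibrate $d(r)=\Theta(r^{3}/\log^{3/2}r)$ so that the degree-bounded guarantee becomes an $r$-approximation, and read off the $\exp(\Theta(n\log^{5/2}r/r^{3}))$ bound from the branching overhead. Your explicit handling of the $d(r)\ge 2r$ condition and of constant $r$ is a small extra care the paper elides.
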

\begin{proof}
  We refer to the algorithm given by Lemma \ref{lem:karger_sdp} as $\calA$.
  By Lemma \ref{lem:karger_sdp}, 
  for any $3$-colorable graph $G'$ with maximum degree $d = \frac{r^3}{\log^{3/2}r}$, 
  $\calA(G')$ outputs an independent set of size 
  \[\Omega\left(\frac{n}{d^{1/3}\sqrt{\log d}}\right) = 
  \Omega\left(\frac{n \sqrt{\log(r)}}{r\sqrt{\log (r /\log r)}}\right).\] 
  Note that since $G'$ is $3$-colorable, $\alpha(G') \geq n/3$. Thus 
  we can choose $d(r) = \tilde O(r^3/\log^{3/2}r)$ so that 
  $\calA(G')$ outputs an independent set of size $\alpha(G')/r$ in 
  polynomial time (recall here that $\tilde O$ suppresses 
  $\polyloglog(r)$ factors). 

  We then apply Lemma \ref{lem:bansal_indset} on the input $G$ 
  with $\calA(G')$ as defined above to obtain an independent set of expected 
  size $\alpha(G)/r$ in expected time 
  \begin{align*}
    O^*\left(\exp\left(\frac {n \log^{3/2}r} {r^3} \log\left(\frac{4r^3}{r\log^{3/2}r}\right)\right)\poly(n)\right) 
        = O^*\left(\exp\left(\tilde{O}\left(\frac {n\log^{5/2}r} {r^3}\right)\right)\right). 
  \end{align*}
\end{proof}

To complete the proof of Lemma \ref{lem:is_result}, note that Corollary 
\ref{cor:is} gives the desired runtime and approximation ratio in expectation. 
Thus it suffices to run multiple iterations of this algorithm to obtain 
our desired probability of failure.

\begin{proof}[Proof of Lemma \ref{lem:is_result}]
  Let $G$ be a given 3-colorable graph. Let $\calB(G, r)$ denote the algorithm 
  given by Corollary \ref{cor:is}, and let $T'(n,r) = 
  O^*\left(\exp\left(\tilde O \left(\frac {n\log^{5/2}r} {r^3}\right)\right)\right)$ 
  denote the expected runtime of $B(G, r)$. 
  We run $B(G, r/2)$ $r$ times and output the maximum independent set given. 
  If the total runtime exceeds $5rT'(n, r/2)$, we terminate the algorithm.

  By construction, this algorithm runs in time $O(r T'(n, r/2)) = 
  O^*\left(\exp\left(\tilde O\left(\frac {n\log r} {r^3}\right)\right)\right)$. 
  By Markov's inequality, the probability that the total runtime exceeds 
  $5rT'(n, r/2)$ is at most $\frac 1 5$. Thus it remains to bound the error 
  probability given that we successfully complete $r$ iterations of $B(G, r/2)$.

  Fix a single run of $B(G, r/2)$, and let $S$ be the resulting independent set. 
  To bound the probability that $S$ is small, note that 
  $\Pr[|S| \leq \alpha(G)/r] = \Pr[\alpha(G) - |S| \geq \alpha(G)(1 - 1/r)]$. 
  Since $|S| \leq \alpha(G)$, we can apply Markov's inequality to obtain 
    \begin{align*}
      \Pr[\alpha(G) - |S| \geq \alpha(G)(1 - 1/r)] \leq 
      \frac{\E[\alpha(G) - |S|]}{\alpha(G)(1 - 1/r)} = 
      \frac{\alpha(G)(1 - 2/r)}{\alpha(G)(1 - 1/r)} = 1 - \frac{1}{r-1}.
    \end{align*}
  Thus the probability that none of the $r$ runs of $B(G, r/2)$ give an 
  independent set of size at least $\alpha(G)/r$ is at most 
  $(1-1/(r-1))^r \leq 1/e$. 
  Combining with the runtime error, we get a total probability of failure of at 
  most $\frac 1 5 + \frac 4 {5e} < 1/2$.
\end{proof}

\subsection{Graph Coloring}

In this section, we refer to the algorithm given by Lemma \ref{lem:is_result} 
as $\IS(G, r)$ for consistency with the notation used in~\cite{bansal2019new}.
To prove Theorem \ref{thm:main_result}, we follow the same general approach 
outlined by Bansal et. al. in \cite{bansal2019new} with $\IS(G, r)$ as a 
subroutine. For completeness, we describe the full algorithm and analysis below. 
Let $t = \frac n {r^3}$, $r' = r/\log(n/t) = \frac r {\log r^3}$.

\begin{algorithm}[H]
\caption{$r$-Approximate Graph Coloring}
\label{alg:graph_col}
\begin{algorithmic}[1]
  \State $c \gets 0$
  \While{$|V| \geq t$} \label{algo:line:loop-start}
    \State $c \gets c + 1$
    \State $C_c \gets \emptyset$
    \For{$i = 1, \dots, n$}
      \State $C_c \gets \textnormal{argmax}(|C_c|, |\IS(G[V], r')|)$ 
      \label{algo:line:is}
    \EndFor
    \State $V \gets V \setminus C_c$
  \EndWhile \label{algo:line:loop-end}
  \State $(C_{c+1}, C_{c+2}, C_{c+3}) \gets$ brute force optimum coloring of $G[V]$ \label{algo:line:brute-force}\\
  \Return $(C_1, \dots, C_{c+3})$
\end{algorithmic}
\end{algorithm}

\begin{claim}
\label{claim:is_boosting}
  For a fixed iteration $c$ of Algorithm \ref{alg:graph_col}, 
  $C_c \geq \frac{|V|}{3r'}$ with probability at least $1 - (1/2)^n$. 
\end{claim}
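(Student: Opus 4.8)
The plan is to first observe that inside iteration $c$ the set $V$ is only ever shrunk from the original vertex set of $G$, so $G[V]$ is an induced subgraph of the $3$-colorable graph $G$ and is therefore itself $3$-colorable. Any $3$-coloring of $G[V]$ partitions $V$ into three independent sets, the largest of which has size at least $|V|/3$; hence $\alpha(G[V]) \ge |V|/3$. This bound is what converts the multiplicative approximation guarantee of $\IS$ into an absolute lower bound on the size of the independent set we can hope to find.

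Next I would unpack Lemma~\ref{lem:is_result}: a single invocation $\IS(G[V], r')$ returns, with probability at least $1/2$, an independent set of size at least $\alpha(G[V])/r' \ge |V|/(3r')$. Call such an invocation \emph{good}. The inner \textbf{for} loop of Algorithm~\ref{alg:graph_col} (line~\ref{algo:line:is}) performs exactly $n$ such invocations, each with fresh independent randomness, so the probability that none of the $n$ invocations is good is at most $(1/2)^n$. Since line~\ref{algo:line:is} retains the largest independent set encountered so far, $|C_c| \ge |V|/(3r')$ whenever at least one invocation is good. Therefore $|C_c| \ge |V|/(3r')$ with probability at least $1 - (1/2)^n$, which is the claim.

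I do not expect a genuine obstacle here; the statement follows by straightforward probability amplification. The only points that require a moment's care are (i) that $3$-colorability is inherited by induced subgraphs, so the bound $\alpha(G[V]) \ge |V|/3$ is valid at every iteration regardless of how much of $G$ has already been removed, and (ii) that the number of repetitions in the \textbf{for} loop is $n$ rather than $|V|$, which is exactly what yields a failure probability of $(1/2)^n$ (and matches the strength of the bound stated in the claim). The independence of the $n$ runs, and hence the product bound on the all-fail event, is immediate from the description of the algorithm.
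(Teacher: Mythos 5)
Your proof is correct and follows the same route as the paper's: lower-bound $\alpha(G[V])$ by $|V|/3$ via $3$-colorability of the induced subgraph, then amplify the $1/2$ success probability of $\IS(G[V],r')$ over the $n$ independent runs in the inner loop to get failure probability $(1/2)^n$. Your write-up is if anything slightly more careful than the paper's (explicitly noting the independence of the runs and that the argmax retains any good outcome).
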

\begin{proof}
  At iteration $c$, $G[V]$ must have an independent set of size at least $|V|/3$, 
  since $G$ is 3-colorable and removing vertices cannot increase the chromatic 
  number. 
  By Lemma \ref{lem:is_result}, for each $i \in [n]$, 
  the independent set $C_c$ given by line \ref{algo:line:is}, 
  $|C_c| \geq \alpha(G[V])/r' \geq \frac {|V|} {3r'}$ with probability at 
  least $\frac 1 2$. Thus $\Pr[|C_c| < |V|/(3r')] \leq (1/2)^n$ as desired.
\end{proof}

\begin{claim}
  \label{claim:correctness}
  Algorithm \ref{alg:graph_col} gives a valid coloring of $G$ using at most 
  $O(r)$ colors with high probability. 
\end{claim}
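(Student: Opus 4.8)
The plan is to analyze the two phases of Algorithm~\ref{alg:graph_col} separately: the while-loop that greedily peels off large independent sets, and the brute-force step that finishes off the small remaining graph. For the while-loop, I would first condition on the event that every invocation of line~\ref{algo:line:is} succeeds, i.e.\ that in each iteration $c$ we have $|C_c| \geq |V|/(3r')$; by Claim~\ref{claim:is_boosting} each iteration fails with probability at most $2^{-n}$, and since the loop runs at most $n$ times (each iteration removes at least one vertex), a union bound gives that all iterations succeed with probability at least $1 - n 2^{-n}$, which is high probability. On this good event, each iteration shrinks $|V|$ by a factor of $(1 - \frac{1}{3r'})$, so after $k$ iterations the number of remaining vertices is at most $n(1 - \frac{1}{3r'})^k \le n e^{-k/(3r')}$. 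Setting this to be less than $t = n/r^3$ shows the loop terminates after $k = O(r' \log(n/t)) = O(r' \log r^3) = O(r)$ iterations, using the definition $r' = r/\log r^3$. Hence the while-loop uses $c = O(r)$ colors.

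Next I would bound the colors used by the brute-force step: line~\ref{algo:line:brute-force} colors $G[V]$ optimally, and since $G$ is $3$-colorable so is every induced subgraph, so this uses at most $3$ additional colors. Adding the two contributions, the total number of colors is $c + 3 = O(r)$, and the coloring is valid because each $C_c$ is an independent set (so monochromatic classes are legal) and the final three classes form a proper coloring of the leftover graph on a disjoint vertex set. The remaining point is correctness of the termination condition itself: the loop exits precisely when $|V| < t$, at which point the brute-force step is well-defined, so nothing is left uncolored.

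The main obstacle — really the only subtle point — is making sure the probability bookkeeping is clean: we need the bound ``$|C_c| \ge |V|/(3r')$ in \emph{every} iteration'' to hold simultaneously, and for that we must first argue the loop cannot run more than $n$ times \emph{regardless} of the random outcomes (which is immediate since $V \gets V \setminus C_c$ and $|C_c| \ge 1$ always, as $\IS$ returns a nonempty set on a nonempty graph), before applying the union bound over those $\le n$ iterations. Once that is in place, the geometric-decay computation is routine. I would present it as: with probability $\ge 1 - n 2^{-n}$, all iterations satisfy the size guarantee; on this event $c = O(r)$ by the geometric argument above; and unconditionally the brute-force step adds $3$ colors; therefore Algorithm~\ref{alg:graph_col} outputs a valid $O(r)$-coloring with high probability.
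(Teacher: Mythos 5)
Your proposal is correct and follows essentially the same route as the paper: a union bound over the at most $n$ while-loop iterations (each failing with probability $2^{-n}$ by Claim~\ref{claim:is_boosting}), a geometric-decay computation showing the loop terminates after $O(r'\log(n/t)) = O(r)$ iterations, and the observation that the brute-force step adds only $3$ colors since induced subgraphs of a $3$-colorable graph are $3$-colorable. Your extra remark that the loop runs at most $n$ times regardless of randomness is the same observation the paper uses to justify the union bound.
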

\begin{proof}
  Validity of the coloring follows is clear by construction, since we choose a 
  new color for each independent set. It remains to bound the number of 
  colors used.

  Note that $c \leq n$ at all times, since $|V|$ reduces by at least one at each 
  iteration. Thus 
  by Claim \ref{claim:is_boosting}, $|V|$ reduces by a factor of $(1 - 1/(3r'))$ 
  at every iteration with probability at least $1 - n/(2^n)$. 
  After $\ell$ iterations, 
    \[|V| \leq n\left(1 - \frac 1 {3r'}\right)^{\ell}
    \leq n \cdot \exp\left(-\frac \ell {3r'}\right) 
    = n \cdot \exp\left(-\frac{\ell \ln(n/t)}{3r}\right) 
    = n \cdot \left(\frac n t\right)^{-\ell/3r}.\]
  Therefore, when $\ell = 3r$, $|V| \leq t$, so Algorithm \ref{alg:graph_col} 
  runs for at most $3r+1$ iterations. Since $G$ is 3-colorable, 
  the final optimum coloring of the remaining vertices 
  on line \ref{algo:line:brute-force} uses at most 3 colors. 
  Thus the total number of colors used is at most $3r+4 = O(r)$.
\end{proof}

\begin{claim}
  \label{claim:runtime}
  Algorithm \ref{alg:graph_col} runs in time 
  $O^*\left(\exp\left(\tilde{O}\left(\frac {n\log^{11/2}r} {r^3}\right)\right)\right)$.
\end{claim}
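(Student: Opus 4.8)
The plan is to bound the runtime of Algorithm \ref{alg:graph_col} by accounting separately for the while-loop and the brute-force step, and to show that the while-loop dominates. First I would observe that by Claim \ref{claim:correctness}, the while-loop runs for at most $3r+1$ iterations, and each iteration makes $n$ calls to $\IS(G[V], r')$. Each such call runs in time $O^*\left(\exp\left(\tilde{O}\left(\frac{n(\log r')^{5/2}}{(r')^3}\right)\right)\right)$ by Lemma \ref{lem:is_result}, applied with parameter $r' = r/\log(r^3) = \tilde\Theta(r/\log r)$. The key calculation is to substitute $r' = \Theta(r/\log r)$ into this bound: $(r')^3 = \Theta(r^3/\log^3 r)$ and $(\log r')^{5/2} = O(\log^{5/2} r)$, so the exponent becomes $\tilde{O}\!\left(\frac{n\log^{5/2}r \cdot \log^3 r}{r^3}\right) = \tilde{O}\!\left(\frac{n\log^{11/2}r}{r^3}\right)$, which is where the $\log^{11/2}r$ in the target bound comes from. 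Multiplying by the $n$ inner iterations, the $3r+1$ outer iterations, and the polynomial overhead of evaluating $\texttt{argmax}$ only contributes $\poly(n)$ and an extra $r$ factor, both of which are absorbed into the $O^*$ and the $\exp(\tilde O(\cdot))$ respectively (since $r \le \exp(\tilde O(n\log^{11/2}r/r^3))$ — one should note $r^4 \le 2^n$ or handle tiny $r$ separately, but for the regime of interest this is immediate).

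Next I would handle the brute-force step on line \ref{algo:line:brute-force}. When the loop exits, $|V| \le t = n/r^3$, so the inclusion-exclusion coloring algorithm of \cite{BHK09} runs in time $O^*(2^{|V|}) = O^*(2^{n/r^3})$. I would check that $2^{n/r^3} = \exp\left(\frac{n\ln 2}{r^3}\right)$ is dominated by $\exp\left(\tilde{O}\left(\frac{n\log^{11/2}r}{r^3}\right)\right)$, which is clear since $\log^{11/2} r \ge \ln 2$ for all $r \ge 2$ (and for the trivial case $r = 1$ the whole algorithm is just brute force on $G$ itself). Adding the two contributions, the total runtime is $O^*\left(\exp\left(\tilde{O}\left(\frac{n\log^{11/2}r}{r^3}\right)\right)\right)$ as claimed.

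I do not anticipate a genuine obstacle here — this is a bookkeeping argument — but the step requiring the most care is tracking how the substitution $r \mapsto r' = r/\log(r^3)$ degrades the exponent in Lemma \ref{lem:is_result}. One must be careful that $\tilde O$ only suppresses $\polyloglog(r)$ factors, so the $\log^3 r$ blowup from $1/(r')^3$ genuinely appears in the exponent and is not hidden by the $\tilde O$; combined with the native $\log^{5/2} r$ from the independent-set subroutine this gives exactly $\log^{11/2} r$. A secondary point of care is confirming that the multiplicative $\poly(n)$ and $O(r)$ factors (from the $n$ argmax evaluations and $O(r)$ outer iterations) are legitimately swallowed by $O^*$ and by the slack in the exponential term, respectively; this holds for all $r$ in the meaningful range $r = O(n^{1/3})$, and the complementary range $r = \Omega(n^{1/3})$ can be disposed of by noting the claimed runtime is then $2^{\tilde O(1)} \cdot \poly(n) = \poly(n)$ and $O(r)$-coloring a $3$-colorable graph in polynomial time is already known.
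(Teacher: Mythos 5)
Your proposal is correct and follows essentially the same route as the paper: bound the while-loop by multiplying the $O(r)$ iterations and polynomially many $\IS$ calls against the Lemma \ref{lem:is_result} runtime, substitute $r' = r/\log(r^3)$ to pick up the $\log^3 r$ factor that combines with $\log^{5/2} r'$ to give $\log^{11/2} r$, and check that the final brute-force coloring on at most $t = n/r^3$ vertices is dominated. The only cosmetic difference is that the paper bounds the last step by a naive $O(3^t)$ rather than the $O^*(2^t)$ inclusion-exclusion bound; both are absorbed identically.
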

\begin{proof}
  We first bound the runtime of the while loop in lines 
  \ref{algo:line:loop-start}-\ref{algo:line:loop-end}. 
  By Lemma \ref{lem:is_result}, $\IS(G[V], r')$ runs in time 
  $O^*\left(\exp\left(\tilde{O}\left(\frac {n\log^{5/2}r'} {r'^3}\right)\right)\right)$. 
  By the analysis in the proof of Claim \ref{claim:correctness}, 
  the total number of iterations is at most $O(r)$, and in each iteration 
  we run $\IS(G[V], r')$ $r$ times. This the loop runs in time 
  \begin{align*}
    &O^*\left(r^2\exp\left(\tilde{O}\left(\frac {n\log^{5/2}r'} 
    {r'^3}\right)\right)\right) 
    = O^*\left(\exp\left(\tilde{O}\left(\frac {n\log^{5/2}(r/\log(r^3))} 
    {r^3/\log^3(r^3)}\right)\right)\right) \\
    = &O^*\left(\exp\left(\tilde O\left(\frac{n \log^{11/2} r}{r^3}\right)\right)\right).
  \end{align*}

  Using a naive brute force algorithm, the final step in line \ref{algo:line:brute-force}
  runs in time 
  $O\left(3^t\right) = O\left(\exp\left(O\left(\frac{n}{r^3}\right)\right)\right)$, 
  giving us our desired time bound.
\end{proof}

\bibliographystyle{plainurl}
\bibliography{references}

\end{document}